\definecolor{blue}{rgb}{0,0,1}
\newcommand{\cech}{\mbox{\it \v{C}}}
\newcommand{\VR}{\mbox{\it VR}}
\newcommand{\proof}{\noindent{\bf Proof. }}
\newcommand{\qed}{\hfill $\Box$}
\newtheorem{alg}[section]{Procedure}
\newtheorem{remark}[section]{Remark}
\newtheorem{theorem}[section]{Theorem}
\newtheorem{definition}[section]{Definition}
\newtheorem{proposition}[section]{Proposition}
\begin{document}

\title{Persistent Entropy for Separating Topological Features from Noise in Vietoris-Rips Complexes 
\thanks{Authors are partially supported by  Spanish Government
under grant MTM2015-67072-P (MINECO/FEDER, UE).}
}


\author{Nieves Atienza$^1$,  Rocio Gonzalez-Diaz$^1$, 
       Matteo Rucco$^2$ 
\\\\
$^1$ 
              Depto. Mat. Aplic. I, E.T.S.I. Informatica, \\
University of Seville, Spain \\
              email: $\{$natienza,rogodi$\}$@us.es          
         \\
           $^2$
             School of Science and Tech., Computer Science Division,\\
University of Camerino, Italy\\
email: matteo.rucco@unicam.it 
}

\maketitle

\begin{abstract}
Persistent homology studies the evolution of $k$-dimensional holes along a nested sequence of simplicial complexes (called a filtration). The set of bars (i.e. intervals)  representing birth and death times of  $k$-dimensional holes along such sequence is called the persistence barcode. $k$-Dimensional holes with short lifetimes are informally considered to be ``topological noise'', and those with  long lifetimes are considered to be ``topological features'' associated to the   filtration.
  {\it Persistent entropy} is defined  as the Shannon entropy of the persistence barcode of a given filtration. In this paper we present  new important properties of  persistent entropy of $\cech$ech and Vietoris-Rips filtrations.  Among the properties, we put a focus on the stability theorem that allows to use persistent entropy for comparing persistence barcodes. Later, 
 we derive a simple method for separating topological noise from features in Vietoris-Rips filtrations.

{\bf Keywords:} Persistent homology,
persistence barcodes,
Shannon entropy, $\cech$ech and Vietoris-Rips complexes,
topological noise,
topological feature
\end{abstract}
\section{Introduction}

Topology is the branch of mathematics that studies shapes and maps among them. From the algebraic definition of topology a new set of algorithms have been derived. These algorithms are identified with ``computational topology'' or often pointed out as Topological Data Analysis (TDA) and are used for investigating high-dimensional data in a quantitative manner.

\begin{figure}[t!]
	\centering
		\includegraphics[width=10cm]{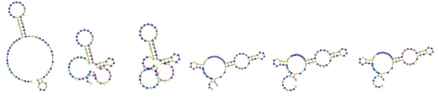}
	\caption{From left to right: RNA secondary suboptimal structures within different bacteria.}
\label{fig_RNA}
\end{figure}

Persistent homology appears as a fundamental tool in Topological Data Analysis.
It studies the evolution of $k$-dimensional holes along a sequence $F$ of simplicial complexes.
The persistence barcode $B(F)$ of $F$ is the collection of  bars (i.e. intervals) representing birth and death times of 
$k$-dimensional holes along such sequence.
In $B(F)$,  $k$-dimensional holes with short lifetimes are informally considered to be ``topological noise'', and those with long lifetimes are  ``topological features'' of the given data.

Persistent homology based techniques are nowadays widely used for analyzing high dimensional data-set and they are good tools for shaping these data-set and for understanding the meaning of the shapes.  Persistent homology reveals the global structure of a data-set and it is a powerful tool for dealing with high dimensional data-set without performing dimensionality reduction.  There are several techniques for building a topological space from the data. The main approach is to complete the data to a collection of combinatorial objects, i.e. simplices. A nested collection of simplices forms a simplicial complex. Simplicial complexes can be obtained from graphs and point cloud data (PCD)~\cite{binchi2014jholes,adams2011javaplex}. 
For example,  PCD can be completed to simplicial complexes by using the Vietoris-Rips  filtration,
which is  a sequence of simplicial complexes  built on a metric space, providing in this way a topological structure to an otherwise disconnected set of points. It is widely used in TDA because it encodes useful information about the topology of the underlying metric space. 
Let us take a look at Fig.~\ref{fig_RNA}, it represents a collection of RNA secondary sub-optimal structures within different bacteria. All the shapes are characterized by several circular substructures, each of them is obtained by linking different nucleotides. Each substructure encodes functional properties of the bacteria. 
Mamuye et al. \cite{Adane} used Vietoris-Rips complexes and persistent homology for certifying that there are different species but characterized with the same RNA sub-optimal secondary structure, thus these species are functionally equivalent. 
The mathematical details of Vietoris-Rips filtration are given in Section \ref{Topology} of this paper. 

Nevertheless, Vietoris-Rips based analysis suffers of the selection of the parameter $\epsilon$. 
Generally speaking, for different $\epsilon$, different topological features can be observed. 
For example, in~\cite{jonoska2013discrete}, several applications of Vietoris-Rips based analysis to biological problems have been reported and examples of different $\epsilon$ with different meaning were found. In order to select the best $\epsilon$, some statistics have been provided what  is  known as `` persistence landscape''~\cite{bubenik2015statistical}. Landscape is a powerful tool for statistically assessing the global shape of the data over different $\epsilon$. Technically speaking, a landscape is a piecewise linear function that basically maps a point within a persistent diagram (or barcode) to a point in which the $x$-coordinate is the average parameter value over which the feature exists, and the $y$-coordinate is the half-life of the feature. Landscape analysis allows to identify topological features. 
In Section~\ref{entropy}, we present the notion of persistent entropy, as an alternative approach to landscape. The main difference between landscape and our method is that the former uses the average of $\epsilon$, while the latter works directly on fixed $\epsilon$.
More concretely, persistent entropy (which is the Shannon entropy of the persistence barcode) is a tool formally  defined in  \cite{rucco2015characterisation} and used to measure similarities between two persistence barcodes.
A precursor of this definition was given in \cite{chintakunta2015entropy} to measure how different the bars of a barcode are in length.
In \cite{signal}, persistent entropy is used for addressing the comparison between discrete piecewise linear functions.
In Section \ref{sec:entropy_rips}, several properties of the persistent entropy of Vietoris-rips filtrations are presented. For example, the exact formula of maximum and minimum persistent entropy is given for a persistence barcode, fixing the number of bars and the maximum and minimum length. These results are important later in Section \ref{method} for differentiate topological features from noise.  

In  general, ``very'' long living bars (long lifetime) are considered topological features since they are stable to ``small'' changes in the filtration. 
In \cite{confidence} a methodology is presented
for deriving confidence sets for persistence diagrams  to separate topological noise from topological features. 
The authors focused on simple, synthetic examples as proof
of concept. 
Their
methods have a simple visualization: one only needs to add a band around the
diagonal of the persistence diagram. Points in the band are consistent with
being noise.
The first three methods in that paper were based on the
distance function to the data. 
They started with a sample from a distribution $\mathbb{P}$ supported on a topological space $\mathfrak{C}$. The bottleneck distance was used as a metric on the space of persistence diagrams. 
The last method in that paper used density estimation. The advantage of the former was
that it is more directly connected to the raw data. The advantage of the
latter was that it is less fragile; that is, it is more robust to noise and outliers.
In Section \ref{method} in this paper,
we derive a simple method for separating topological  features from noise of  a
given filtration  using the mentioned persistent entropy measurement. 
Moreover, we claim it is very easy (and fast) to compute, and easy to adapt depending on the application. A preliminary version of this technique was also presented in \cite{datamod}.

\section{Background}
\label{Topology}

This section provides a short recapitulation of the basic concepts  needed as a basis for the presented method for separating topological noise from features.

Informally, a topological space is a set of points each of them equipped with the notion of neighboring. 
A {\it simplicial complex} is a kind of topological space  constructed by the union of $k$-dimensional simple pieces in such a way that the common intersection of two pieces are lower-dimensional pieces of the same kind. More concretely,
 $K$ is composed by a set $K_0$ of {\it $0$-simplices} (also called vertices $V$,  that can be thought as points in $\mathbb{R}^d$);
and, for each $k\geq 1$, a set $K_k$ of {\it $k$-simplices} $\sigma=\{v_0, v_1, \dots, v_k\}$, where $v_i \in V$ for all $i\in \{0,\dots,k\}$, satisfying that:
\begin{itemize}
\item each $k$-simplex has $k+1$ {\it faces} obtained by removing one of its vertices;
\item if a simplex  is in $K$, then all its faces must be in $K$.
\end{itemize}
The underlying topological space of $K$ is the union of the geometric realization of its simplices: points for $0$-simplices, line segments for $1$-simplices, filled triangles for $2$-simplices,  filled tetrahedra for $3$-simplices and their $k$-dimensional counterparts for $k$-simplices.
 We only consider finite  simplicial complexes with finite dimension, i.e., there exists an integer $m$ (called the dimension of $K$) such that for $k>m$, $K_k=\emptyset$ and, for $0\leq k\leq m$, $K_k$ is a finite set.

Two classical examples of simplicial complexes are $\cech$ech complexes and  Vietoris-Rips complexes (see \cite[Chapter III]{edels}). 
Let $V$ be a (finite) PCD in $\mathbb{R}^d$.  The {\it $\cech$ech complex} of $V$ and $r$ denoted by {\it Č}$_{V}(r)$ is  the  simplicial complex whose simplices are formed as follows. For each subset $S$ of points in $V$, form   a  closed ball of radius $r$ around each point in $S$, and include $S$ as a simplex of {\it Č}$_{V}(r)$  if there is a common point contained in all of the balls in $S$. This structure satisfies the definition of abstract simplicial complex.
The {\it Vietoris-Rips complex} denoted as $\VR_{V}(r)$ is essentially the same as the $\cech$ech complex. Instead of checking if there is a common point contained in the  intersection of  the $(r)$-ball around $v$ for all $v$ in $S$, we may
just check pairs
adding $S$ as a simplex of {\it Č}$_{V}(r)$ if   all the balls have pairwise intersections.  We have 
 $\mbox{\it Č}_{V}(r) \subseteq \VR_{V}(r)\subseteq \mbox{\it Č}_{V}(\sqrt{2} r)$. See Fig.\ref{figure_cech}.
In practice,  Vietoris-Rips complexess are more often used since they are easier to compute than $\cech$ech omplexes.

\begin{figure}[t!]
	\centering
		\includegraphics[width=6cm]{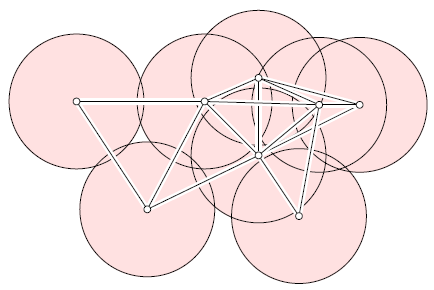}
	\caption{\cite[p. 72]{edels} Nine points with pairwise intersections among the disks indicated by
straight edges connecting their centers, for a fixed time $\epsilon$. The 
$\cech$ech complex $\cech_{V}(\epsilon)$ fills nine of the ten possible
triangles as well as the two tetrahedra. The Vietoris-Rips complex
$\VR_{V}(\epsilon)$
fills the ten 
triangles and the two tetrahedra. }
	\label{figure_cech}
\end{figure}

Homology is an algebraic machinery used for describing topological spaces. The {\it $k$-Betti number} $\beta_k(K)$ represents  the rank of the $k$-dimensional homology group $H_k(K)$ of a given simplicial complex $K$.
Informally, $\beta_0(K)$ is the number of connected components of $K$, $\beta_1(K)$  counts the number of tunnels, $\beta_2(K)$ can be thought as the number of voids of $K$ and, in general,  $\beta_k(K)$ can be thought as the number of $k$-dimensional holes of $K$.
More precisely, homology groups are defined from an algebraic structure called chain complex composed by a set of groups $\{C_k(K)\}_k$, where each $C_k(K)$ is the group of $k$-chains generated by all the $k$-simplices of $K$, and a set of homomorphisms $\{\partial_k: C_k(K)\to C_{k-1}(K)\}_k$, called boundary operators, describing the boundaries of $k$-chains.
A $k$-chain $a$ such that $\partial_k(a)=0$
is a {\it $k$-cycle}.
It is a {\it $k$-boundary} if there exists a $(k+1)$-chain $b$ such that $\partial_{k+1}(b)=a$.
This way, the {\it $k$-dimensional homology group $H_k(K)$} is the group of $k$-cycles modulo the group of  
$k$-boundaries. 
The {\it $k$-th Betti number} $\beta_k(K)$ is the rank of $H_k(K)$.
See \cite{hatcher2002algebraic} and \cite{munkres1984elements} for an introduction to algebraic topology.

Persistent homology is a method for computing $k$-dimensional holes of a given topological space at different spatial resolutions. 
The key idea is as follows.

\begin{itemize}
\item  First, 
the space must be represented as a simplicial complex $K$ and a distance function  must be defined on the space. 
\item 
Second, a {\it filtration} of $K$, referred above as different spatial resolutions, is computed. 
More concretely, 
a filtration $F$ of  $K$   is a collection of simplicial complexes
F=$\{K(t)\,|\, t \in \mathbb{R}\}$ of $K$ such that $K(t) \subset K_s$ for $t< s$ and there exists $t_{\max}\in \mathbb{R}$ such that $K_{t_{\max}}=K$. The filtration time (or filter value) of a simplex $\sigma \in K$ is the smallest $t$ such that $\sigma \in K(t)$.
For example,
let $V$ be a PCD in $\mathbb{R}^d$ and let $r,r'\in {\mathbb R }$.
Then, there is a natural inclusions {\it Č}$_V(r)\subseteq ${\it Č}$_{V}(r')$ and $\VR_V(r)\subseteq \VR_V(r')$ whenever $r\leq r'$.  
The simplicial complexes {\it Č}$_V(r)$ together with the inclusion maps define a filtered simplicial complex {\it Č}$_V$ called {\it $\cech$ech filtration}.
Similarly, the simplicial complexes $\VR_V(r)$ together with the inclusion maps define a filtered simplicial complex $\VR_V$ called {\it Vietoris-Rips filtration}. 
\item
Then, persistent homology describes how the homology of a given simplicial complex $K$ changes along filtration
$F=\{K(t) \,|\, t \in \mathbb{R}\}$. 
If the same topological feature (i.e., $k$-dimensional hole) is detected along a large number of subsets in the filtration, then it is 
likely to represent a true feature of the underlying space, rather than artifacts of sampling, noise, or particular choice of parameters.
More concretely, a bar in the $k$-dimensional persistence barcode, with endpoints $[t_{start}, t_{end}),$ corresponds to a $k$-dimensional hole that appears at filtration time $t_{start}$ and remains until  filtration  time $t_{end}$. 
The set of bars $[t_{start}, t_{end})$ representing birth and death times of homology classes is called the {\it persistence barcode} $B(F)$ of the filtration $F$.
Analogously, the set of points $(t_{start}, t_{end})\in {\mathbb R}^2$  is called the {\it persistence diagram} $dgm(F)$ of the filtration $F$.
\end{itemize}
For more details and a more formal description  we refer to~\cite{edels}.

Classically, the bottleneck distance (see \cite[page 229]{edels}) is used to compare the persistence diagrams of two different filtrations. 
Concretely, 
let $dgm(F)=\{a_1,\dots,a_k\}$ and $dgm(F')=\{a'_1,\dots,a'_{k'}\}$ be, respectively, the persistence diagram $dgm(F)$ and $dgm(F')$ of the  two filtrations $F$ and $F'$,  then 
$$d_b(dgm(F),dgm(F'))=\inf_{\gamma}\{\sup_v\{||v-\gamma(v)||_{\infty}\}\}$$ 
is the {\it bottleneck distance} between $dgm(F)$ and $dgm(F')$
where,
for points $a=(x,y)$ and $\gamma(a)=(x',y')$ in $\mathbb{R}^2$, 
$||a-\gamma(a)||_{\infty}=\max\{|x-x'|,|y-y'|\}$ and 
$\gamma: dgm(F)\to dgm(F')$ is a bijection that can associate a point off the diagonal with another point on or off the diagonal. Here, {\it diagonal} is the set of points $\{(x,x)\}\subset {\mathbb R}^2$.

\begin{remark}\label{uno} Since simplicial complexes considered in this paper are finite then for given filtrations $F$ and $F'$,
we have that:
\begin{itemize}
\item  $dgm(F)$ is a finite set of points in ${\mathbb R}^2$.
\item  $d_b(dgm(F),dgm(F'))=\min_{\gamma}\{\max_a\{||a-\gamma(a)||_{\infty}\}\}$.
\end{itemize} 
\end{remark}

In the following theorem, it is state that low-distortion correspondences between two PCDs,  $V$ and $W$,  in $\mathbb{R}^d$ give rise to small distance in the bottleneck distance of the persistence diagrams of the  $\cech$ech filtrations $\mbox{\it Č}_V$ and $\cech_W$ and the Vietoris-Rips filtrations $\VR_V$ and $\VR_W$. 
  
\begin{theorem}{\bf Persistence stability for $\cech$ech and Vietoris-Rips  complexes} \cite[Th. 5.2.]{chazal}\label{th:chazal}
Let $V$ and $W$ be two sets of points in $\mathbb{R}^d$ then, 
for either $F_V=\mbox{\it Č}_V$ and $F_W=\mbox{\it Č}_W$ or $F_V=\VR_V$ and $F_W=\VR_W$, we have that:
$$d_b(dgm(F_V),dgm(F_W))\leq 2 d_{GH}(V,W),$$
where $2d_{GH}(V,W)=\inf_c\{sup_{v,v'} |d(v,v')-d(c(v),c(v'))|\} \}$
for $c:V\to W$ being surjective. 
\end{theorem}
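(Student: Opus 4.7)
The plan is to reduce the claim to the algebraic stability theorem for persistence modules via an interleaving argument. Starting point: for any $\eta > d_{GH}(V,W)$, the definition given in the statement lets me extract a surjective correspondence $c:V\to W$ whose metric distortion $\sup_{v,v'}|d(v,v') - d(c(v),c(v'))|$ is bounded by $2\eta$. Because $c$ is surjective I can also pick a (not necessarily continuous) section $c':W\to V$ with $c(c'(w))=w$, and the combined correspondence $R = \{(v,c(v))\}\cup\{(c'(w),w)\}\subseteq V\times W$ still has distortion at most $2\eta$, which is the quantity I need to control the filtration shift.

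Next, I would turn the correspondence into simplicial maps between the Vietoris-Rips filtrations. If $\sigma=\{v_0,\dots,v_k\}\in\VR_V(r)$ then every pairwise distance is at most $2r$, so the distortion bound gives $d(c(v_i),c(v_j))\leq 2r+2\eta$ and hence $\{c(v_0),\dots,c(v_k)\}\in\VR_W(r+\eta)$. This defines, for every $r$, a simplicial map $\phi_r:\VR_V(r)\to\VR_W(r+\eta)$ commuting with the filtration inclusions, and the symmetric construction using $c'$ gives $\psi_r:\VR_W(r)\to\VR_V(r+\eta)$. The compositions $\psi_{r+\eta}\circ\phi_r$ and $\phi_{r+\eta}\circ\psi_r$ are contiguous to the corresponding inclusions because $c'(c(v))$ lies within $2\eta$ of $v$ (apply the distortion bound to the pair $(v,c'(c(v)))$ using $d(c(v),c(v))=0$), so any simplex of $\VR_V(r)$ together with its image under $c'\circ c$ spans a simplex of $\VR_V(r+2\eta)$. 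This is exactly a $2\eta$-interleaving at the simplicial level and, after passing to homology, a $2\eta$-interleaving of persistence modules. Invoking the algebraic stability theorem and letting $\eta\to d_{GH}(V,W)$ yields the Vietoris-Rips inequality.

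The main obstacle is the \v{C}ech case, because membership in $\cech_V(r)$ is governed by a common intersection of balls rather than a pairwise condition, so a correspondence of distortion $2\eta$ does not directly send a simplex of $\cech_V(r)$ to a simplex of $\cech_W(r+\eta)$; only the diameter of the simplex is controlled by the distortion, not the existence of a common witness point. The clean way around this is to work not with the abstract \v{C}ech complex on the finite sets $V,W$ but with the sublevel filtration of the ambient distance-to-set functions $d_V,d_W:\mathbb{R}^d\to\mathbb{R}$: by the persistent nerve theorem these filtrations are homotopy equivalent to $\cech_V$ and $\cech_W$, and a correspondence realizing the Gromov-Hausdorff bound produces a Hausdorff-type estimate $\|d_V-d_W\|_\infty \leq 2d_{GH}(V,W)$ on suitable ambient copies, after which the standard $L^\infty$-stability of sublevel-set persistence closes the argument. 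The delicate point to check carefully is precisely this passage between the combinatorial \v{C}ech complex and its geometric realization via the nerve theorem, and verifying that the homotopy equivalences are compatible with the interleaving maps; the Vietoris-Rips sandwich $\cech_V(r)\subseteq\VR_V(r)\subseteq\cech_V(\sqrt{2}r)$ is a useful sanity check but does not by itself give the sharp constant $2$.
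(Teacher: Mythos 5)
The paper does not actually prove this statement: it is imported verbatim from the cited reference (Chazal--de Silva--Oudot, Th.~5.2), so the only meaningful comparison is with the proof given there. Your Vietoris--Rips half is essentially that proof. A surjective $c$ of distortion $<2\eta$ already has a correspondence as its graph (your auxiliary section $c'$ satisfies $(c'(w),w)=(c'(w),c(c'(w)))$, so it adds nothing new), it induces simplicial maps $\VR_V(r)\to\VR_W(r+\eta)$ and back, the round trip $\psi\circ\phi$ is contiguous to the inclusion because $\sigma\cup c'(c(\sigma))$ has diameter at most $2(r+2\eta)$ (and $\phi\circ\psi$ \emph{is} the inclusion, since $c\circ c'=\mathrm{id}$), and algebraic stability converts the interleaving into a bottleneck bound. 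This is correct. The only wrinkle is a convention mismatch: with the paper's radius-$r$ balls (diameter $2r$) your computation yields an $\eta$-interleaving in the filtration parameter with $\eta>d_{GH}(V,W)$, hence the stronger bound $d_b\leq d_{GH}$, which still implies the stated one; also, since $V$ and $W$ are finite the infimum is attained and no limit in $\eta$ is needed.

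The genuine gap is in the $\cech$ech half. You correctly observe that the ambient witness condition is not controlled by pairwise distances, but the repair you propose does not work: $\|d_V-d_W\|_\infty$ is the Hausdorff distance of $V$ and $W$ inside $\mathbb{R}^d$, and this is \emph{not} bounded by $2d_{GH}(V,W)$ --- two translates of the same configuration have $d_{GH}=0$ and arbitrarily large Hausdorff distance. ``Choosing suitable ambient copies'' would mean extending a low-distortion correspondence $V\to W$ to a global near-isometry of $\mathbb{R}^d$, and the known extension results for almost-isometries of subsets of Euclidean space lose more than a constant factor (the error grows like the square root of the distortion times the diameter), so the constant $2$ cannot be recovered along this route. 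The proof in the cited reference avoids the issue entirely by working with the \emph{intrinsic} $\cech$ech complex, in which the witness point is required to lie in the point cloud itself: the witness is then carried by the correspondence exactly as the vertices are, via $d(c(x),c(x_i))\leq d(x,x_i)+2\eta$, and the $\cech$ech case becomes word-for-word parallel to the Rips case. If one insists on the ambient $\cech$ech complex as defined in this paper (the one satisfying the $\sqrt{2}$ sandwich), the relevant observation is that for finite subsets of $\mathbb{R}^d$ the distance matrix determines the configuration up to congruence and hence determines the smallest-enclosing-ball radius of every subset; making that statement quantitatively stable under $2\eta$-perturbations of the distance matrix is precisely the step your sketch leaves open.
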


\begin{remark}\label{dos} 
Since PCDs considered in this paper are finite,  then
 $$2d_{GH}(V,W)=\min_c\{\max_{p,p'} |d(p,p')-d(c(p),c(p'))|\} \}.$$
\end{remark}


\section{Persistent entropy}\label{entropy}

In order to measure how much the construction of a filtration is ordered, a new entropy measure, the so-called \textit{persistent entropy}, were defined in~\cite{rucco2015characterisation}. A precursor of this definition was given in~\cite{chintakunta2015entropy} to measure how different the bars of a barcode were in length. In \cite{signal},  persistent entropy was used  for addressing the comparison between discrete
piece-wise linear functions.

\begin{definition} Given a filtration $F=\{K(t) \,|\,  t\in \mathbb{R}\}$ and the corresponding persistence diagram $dgm(F) = \{a_i=(x_i , y_i) \,|\,  1\leq i\leq n\}$ 
(being $x_i< y_i$ for all $i$), let $L=\{\ell_i=y_i-x_i \,|\, 1\leq i\leq n\}$. The \textit{persistent entropy} $E(F)$ of  $F$ is calculated as follows:
$$
E(F)=-\sum_{i =1}^n p_i  \log(p_i)
\mbox{\;
where $p_i=\frac{\ell_i}{S_L}$, $\,\ell_i=y_i - x_i$, and }S_L=\sum_{i=1}^n \ell_i.
$$
Sometimes, persistent entropy $E(F)$ will  also be denoted by $E(L)$.
\end{definition}

Note that the maximum persistent entropy would correspond to the situation in which all the bars in the associated persistence barcode are of equal length (i.e., $\ell_i=\ell_j$ for all $1\leq i, j\leq n$). Conversely, the value of the persistent entropy decreases as more bars of different lengths are present in the persistence barcode.  
More concretely,  if $E(F)$ has $n$ points,  the possible values of  $E(F)$ lie in the interval $[0,\log(n)]$.

The following result supports the idea that persistent  entropy can differentiate long from short bars as we will see in Section \ref{method}.
\begin{theorem}\label{theorem}\cite{datamod}
Given a filtration $F$ and the corresponding persistence diagram $dgm(F) = \{a_i=(x_i , y_i) \,|\,  1\leq i\leq n\}$, let 
$L=\{\ell_i=y_i-x_i \,|\, 1\leq i\leq n\}$.
For a fixed integer $i$, $1\leq i\leq n$, 
let  
$$L'=\{\ell'_1,\dots,\ell'_{i},\ell_{i+1},\dots,\ell_n\}$$
where $ \ell'_j=\frac{P_{i}}{e^{E(R_{i})}}$ for $1\leq j\leq i$,
$R_{i}=\{\ell_{i+1},\dots\ell_n\}$ and  $P_{i}=\sum_{j=i+1}^{n}\ell_j$. 
  Then 
 $$E(L)\leq E(L').$$
 \end{theorem}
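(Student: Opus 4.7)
The strategy I would follow is to first obtain a clean closed-form expression for $E(L')$, then decompose $E(L)$ in a parallel way, and finally bound the result using the entropy-maximization principle together with weighted AM--GM.

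\textbf{Step 1: closed form for $E(L')$.} The multiset $L'$ has a very rigid structure: its first $i$ entries are all equal to the constant $c := P_i/e^{E(R_i)}$, while the last $n-i$ entries are precisely the lengths in $R_i$. Writing $S_{L'} = ic + P_i$ and splitting the entropy sum, I would use the grouping identity
$$-\sum_{j>i}\frac{\ell_j}{S_{L'}}\log\frac{\ell_j}{S_{L'}} \;=\; \frac{P_i}{S_{L'}}\Bigl(E(R_i)-\log\frac{P_i}{S_{L'}}\Bigr),$$
together with the fact that the first $i$ summands contribute $-(ic/S_{L'})\log(c/S_{L'})$. Substituting $\log c = \log P_i - E(R_i)$, the $\log P_i$ terms cancel against each other and the $E(R_i)$ terms combine with the weights $ic/S_{L'}$ and $P_i/S_{L'}$ (which sum to $1$), collapsing the whole expression to
$$E(L') \;=\; \log\bigl(i+e^{E(R_i)}\bigr).$$

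\textbf{Step 2: decomposition of $E(L)$.} I would apply the very same splitting trick to $L$ itself. Let $T_i=\sum_{j=1}^{i}\ell_j$, let $L_i=\{\ell_1,\dots,\ell_i\}$, and set $u=T_i/S_L$. Then the same identity, applied separately to the first block $L_i$ and to the second block $R_i$, gives
$$E(L) \;=\; u\,E(L_i) + (1-u)\,E(R_i) + H_2(u),$$
where $H_2(u)=-u\log u-(1-u)\log(1-u)$ is the binary entropy function.

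\textbf{Step 3: bounding against $E(L')$.} Since $L_i$ has $i$ entries, the standard maximum-entropy inequality gives $E(L_i)\le\log i$, so
$$E(L) \;\le\; u\log i + (1-u)\,E(R_i) + H_2(u) \;=\; \log\!\left(\Bigl(\tfrac{i}{u}\Bigr)^{\!u}\Bigl(\tfrac{e^{E(R_i)}}{1-u}\Bigr)^{\!1-u}\right).$$
Now weighted AM--GM with weights $u$ and $1-u$ yields
$$\Bigl(\tfrac{i}{u}\Bigr)^{\!u}\Bigl(\tfrac{e^{E(R_i)}}{1-u}\Bigr)^{\!1-u} \;\le\; u\cdot\tfrac{i}{u}+(1-u)\cdot\tfrac{e^{E(R_i)}}{1-u} \;=\; i+e^{E(R_i)}.$$
Taking logarithms gives $E(L)\le \log(i+e^{E(R_i)})=E(L')$, as required.

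The main obstacle is not the inequality chain itself, which is just entropy maximization plus weighted AM--GM, but the algebraic bookkeeping in Step~1: one must be careful to substitute $c=P_i/e^{E(R_i)}$ at exactly the right moment so that the $\log P_i$ terms cancel cleanly and the $E(R_i)$ factors combine with weights summing to $1$. Boundary situations, in particular $i=n$ (so $R_i=\emptyset$, $P_i=0$) and the standard convention $0\log 0=0$, should be noted but do not affect the main argument for $1\le i<n$.
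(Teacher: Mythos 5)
Your proposal is correct, and it takes a genuinely different route from the paper. The paper treats the first $i$ lengths as free variables $x_1,\dots,x_i$, writes $E(M)$ as a function of them, computes the partial derivatives $\partial E(M)/\partial x_k$, and observes that $x_k=P_i/e^{E(R_i)}$ solves the resulting system; the value $\ell'_j$ is thus obtained as a critical point of the entropy over all completions of $R_i$ to an $n$-bar list. You instead never vary anything: you compute $E(L')$ exactly via the grouping identity, obtaining the clean closed form $E(L')=\log\bigl(i+e^{E(R_i)}\bigr)$, decompose $E(L)$ by the same chain rule as $uE(L_i)+(1-u)E(R_i)+H_2(u)$, and then chain together $E(L_i)\le\log i$ with weighted AM--GM. (I checked the grouping computation: the two blocks contribute $E(R_i)$ and $-\log(P_i/S_{L'})$ with weights $ic/S_{L'}$ and $P_i/S_{L'}$ summing to $1$, so the closed form is right.) Your argument buys two things the paper's does not: it is fully rigorous --- the paper's proof exhibits only a stationary point and never verifies that it is a global maximum of $E(M)$ over the positive orthant, whereas your inequality chain proves the bound outright --- and the explicit formula $E(L')=\log(i+e^{E(R_i)})$ is an independently useful byproduct. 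What the paper's approach buys is the stronger framing that $L'$ \emph{maximizes} entropy among all lists extending $R_i$ (your argument only compares $L$ to $L'$, though the maximality follows from your Step 3 since $L$ there is arbitrary). Your remarks on the degenerate case $i=n$ and the convention $0\log 0=0$ are appropriate; for $1\le i<n$ one has $u\in(0,1)$ and no boundary issues arise.
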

 
Observe that we can also write  
$ \ell'_j=\prod_{j=i+1}^n\ell_j^{\ell_j/P_{i}}$. 
This last expression will be very useful 
in the proof of Th.  \ref{th:parada} in Section \ref{method}.

\proof
Let us prove that $E(L')$ is the maximum of all the possible persistent entropies associated to  barcodes with $n$ bars, such that the list of lengths of the last $n-i$ bars of any of such lists is $R_{i}$. 
Let $M= \{x_1,\dots,x_{i},\ell_{i+1},\dots,\ell_n\}$ (where $x_j>0$ for 
$1\leq j\leq i$) be any of such lists.
\\
 Let  $S_x=\sum_{j=1}^{i}x_j$. 
 Then,  the persistent entropy associated to $M$ is:
\begin{eqnarray*}
&&E(M)
=-\sum_{j=1}^{i}
\frac{x_j}{S_x+P_{i}}\log\left(\frac{x_j}{S_x+P_{i}} \right)
-\sum_{j=i+1}^n\frac{\ell_j}{S_x+P_{i}}\log\left(\frac{\ell_j}{S_x+P_{i}} \right)\\
&&=-\sum_{j=1}^{i}
\frac{x_j}{S_x+P_{i}}\log\left(\frac{x_j}{S_x+P_{i}} \right)
-\frac{P_{i}E(R_{i})}{S_x+P_{i}}-\frac{P_{i}}{S_x+P_{i}}\log \left(\frac{P_{i}}{S_x+P_{i}}\right). 
\end{eqnarray*}   
In order to find out the maximum of $E(M)$ with respect to the unknown variables 
$x_k$, $1\leq k\leq i$, we compute the partial derivative of $E(M)$ 
with respect to those variables:
\begin{eqnarray*}
\frac{\partial E(M)}{\partial x_k}
=\frac{1}{(S_x+P_{i})^2} &&\left(P_{i}E(R_{i})+
P_{i}\log\left(\frac{P_{i}}{x_k}\right)
+
\sum_{j\neq k}x_j\log\left(\frac{x_j}{x_k}\right)\right).
\end{eqnarray*}
Finally, $\left\{x_k=\frac{P_{i}}{e^{E(R_{i})}}\,|\, 1 \leq k\leq i\right\}$ is the solution of  $\left\{\frac{\partial E(M)}{\partial x_k}=0 \,|\, 1\leq k \leq i\right\}$.
\qed

The following result establishes a relation between bottleneck distance and persistent entropy.
\begin{proposition}\label{prop:bot}
Let 	$F$ and $F'$ be two filtrations. 
For all  $\epsilon>0$, there exists $\delta>0$ such that if  $d_b(dgm(F),dgm(F'))<\delta$ then $|E(F)-E(F')|<\epsilon$.
  \end{proposition}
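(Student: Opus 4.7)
The strategy is to convert the bottleneck closeness of the diagrams into coordinate-wise closeness of the underlying length lists and then invoke continuity of the Shannon entropy functional. Concretely, the plan is to (i) take the bijection $\gamma$ realizing $d_b(dgm(F),dgm(F'))$, which exists by Remark~\ref{uno}; (ii) translate $\gamma$ into a matching between lifetime multisets with small coordinate-wise differences; and (iii) conclude using continuity of the entropy as a function of the lengths.

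For steps (i) and (ii), fix $\delta>0$ to be chosen. If an off-diagonal $a_i=(x_i,y_i)\in dgm(F)$ is matched to an off-diagonal $\gamma(a_i)=(x'_i,y'_i)\in dgm(F')$, then $|x_i-x'_i|<\delta$ and $|y_i-y'_i|<\delta$, so the lifetimes satisfy
\[
|\ell_i-\ell'_i|=|(y_i-y'_i)-(x_i-x'_i)|<2\delta.
\]
If instead $a_i$ is matched to a diagonal point $(t,t)$, then $\ell_i=(y_i-t)-(x_i-t)$ is bounded in absolute value by $2\delta$, and I would record $\ell'_i=0$; the symmetric convention is applied to off-diagonal points of $dgm(F')$ matched to the diagonal. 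Padding both sides with these zeros produces two length sequences $L=(\ell_1,\dots,\ell_N)$ and $L'=(\ell'_1,\dots,\ell'_N)$ paired so that $|\ell_i-\ell'_i|\leq 2\delta$ for every $i$. Because $0\log 0:=0$ and zero entries contribute nothing to $S_L$, one has $E(F)=E(L)$ and $E(F')=E(L')$.

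For step (iii), I would invoke the fact that the map
\[
E_N(t_1,\dots,t_N)=\log\!\Bigl(\sum_{j} t_j\Bigr)-\frac{1}{\sum_{j} t_j}\sum_{j} t_j\log t_j
\]
is continuous on $\{(t_1,\dots,t_N)\in[0,\infty)^N:\sum_j t_j>0\}$, with the convention $t\log t=0$ at $t=0$. Since $(\ell_1,\dots,\ell_N)$ lies in this set (its total mass equals the fixed positive $S_L$), continuity at that point provides a $\delta>0$ such that every $L'$ with $\max_i|\ell_i-\ell'_i|<2\delta$ satisfies $|E_N(L)-E_N(L')|<\epsilon$, which is exactly $|E(F)-E(F')|<\epsilon$.

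The main obstacle will be that $N$ is not determined by $F$ alone: the number of extra off-diagonal points of $dgm(F')$ matched to the diagonal depends on $F'$, so the domain on which one needs continuity is not fixed. I would handle this by first choosing $\delta<\tfrac{1}{2}\min_i\ell_i$, which forces every bar of $dgm(F)$ to be matched to an off-diagonal bar of $dgm(F')$ with lifetime within $2\delta$ of it, and then bounding the contribution of the remaining "extra" bars of $F'$ (each of length $<2\delta$) to the renormalized sum $-\sum p\log p$. Controlling this last term is the delicate point, since many tiny bars can in principle accumulate; the natural way to close the argument is to assume an a priori bound on the cardinality of $dgm(F')$ (implicit when $F$ and $F'$ arise from nearby finite point clouds via the same construction) and then use the continuity of $E_N$ on the resulting finite-dimensional compact region.
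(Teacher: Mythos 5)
Your proposal is correct and follows essentially the same route as the paper's proof: take the optimal bijection guaranteed by Remark~\ref{uno}, deduce $|\ell_i-\ell'_i|\le 2\delta$ for matched bars, and conclude by continuity of the entropy (the paper applies continuity of $h(x)=-x\log x$ to the normalized probabilities $p_i=\ell_i/S_L$ rather than to the raw length vector, but this is the same idea). The cardinality issue you flag at the end is genuine, and the paper resolves it exactly in the way you propose: it implicitly pads both diagrams to a common finite index set of size $n$ (allowing points on the diagonal, which contribute zero lifetime) and lets $\delta$ depend on $n$ and $S_{L'}$, i.e., it assumes the a priori bound you ask for.
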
  

\begin{proof} 
The proof is similar to the one given in \cite{signal} to demonstrate that persistent entropy associated to  piece-wise linear functions is stable.
\\
Fixed $\epsilon>0$, we have to find $\delta>0$ such that if  $d_b(dgm(F),dgm(F'))<\delta$ then $|E(F)-E(F')|<\epsilon$.
\\
First, since $h(x)=-x\log x$ is a continuous function in $[0,1]$ (redefining $h(0)$ as $0$), for $\epsilon'=\frac{\epsilon}{n}>0$, there exists $\delta'\in (0,1]$ such that if $|x-x'|\leq \delta'$ then
$|h(x)-h'(x)|\leq \epsilon'$.
\\
Take  $\delta=\frac{S_{L'}\delta'}{4n}$ and suppose  $d_b(dgm(F),dgm(F'))<\delta$.
\\
By Remark \ref{uno}, $dgm(F)$ and $ dgm(F')$ are both finite and there  exists a bijection $\bar{\gamma}: dgm(F)\to dgm(F')$ such that  
$d_b(dgm(F),dgm(F'))=\max_a\{||a-\bar{\gamma}(a)||_{\infty}\}$.
Let  $dgm(F)=\{a_1,\dots,a_n\}$   (where some of the $a_i$  can possibly be on the diagonal). 
Let $a_i=(x_i,y_i)$ and $\bar{\gamma}(a_i)=(x'_i,y'_i)$.
Then,  
$$\mbox{$||a_i-\bar{\gamma}(a_i)||_{\infty}=\max\{|x_i-x'_i|,|y_i-y'_i|\}\leq \delta$ for all $i$.}$$
Let $\ell_i=y_i-x_i$ and $\ell'_i=y'_i-x'_i$. Then, 
$$\mbox{$|\ell_i-\ell'_i|=|x_i-y_i-(x'_i-y'_i)|\leq |x_i-x'_i|+|y_i-y'_i| \leq 2\delta$ for all $i$.}$$
Besides,  $$|S_{L}-S_{L'}|=\left|\sum_{i=1}^n \ell_i-\sum_{i=1}^n \ell'_i  \right|\leq  \sum_{i=1}^n |\ell_i-\ell'_i|\leq 2\delta n.$$
\\
Without lost of generality, assume $S_{L}\geq S_{L'}$. Then $S_{L}\leq S_{L'}+2\delta n$.  
\\
Let $p_i=\frac{\ell_i}{S_{L}}$ and $p'_i=\frac{\ell'_i}{S_{L'}}$. Then
$$p_i-p'_i=\frac{\ell_i}{S_{L}}-\frac{\ell'_i}{S_{L'}}=\frac{S_{L'}\ell_i-S_{L}\ell'_i}{S_{L}S_{L'}}\leq \frac{\ell_i-\ell'_i}{S_{L'}}\leq \frac{2\delta}{S_{L'}}= \frac{\delta'}{2n}\leq \delta';$$
$$p'_i-p_i\leq \frac{(S_{L'}+2\delta n)\ell'_i-S_{L'}\ell_i}{S_{L}S_{L'}}
\leq \frac{\ell'_i-\ell_i}{S_{L'}}+\frac{2\delta n\ell'_i}{S_{L'}S_{L'}}
\leq
\frac{2\delta n}{S_{L'}}\left(1+\frac{\ell'_i}{S_{L'}}\right)\leq \delta'.$$
Therefore, 
$$|E(F)-E(F')|=
\left|\sum_{i=1}^n p_i\log p_i-\sum_{i=1}^n p'_i\log p'_i\right|\leq
\sum_{i=1}^n |p_i\log p_i-p'_i\log p'_i|\leq 
\epsilon,$$
which concludes the proof.
\qed
\end{proof}
 The  result above is  used now to prove that  persistent entropy is a stable measure for $\cech$ech and Vietoris-Rips filtrations.

\begin{theorem}
{\bf Persistent entropy stability theorem for $\cech$ech and Vietoris-Rips filtrations.}
Let $V$ and $W$ be two PCDs in $\mathbb{R}^d$. Then, for every $\epsilon>0$ there exists $\delta>0$ such that:
$$\mbox{If $2d_{GH}(V,W)\leq \delta$ then $|E(F_V)-E(F_W)|<\epsilon$,}$$
where either  $F_V=\mbox{\it Č}_V$ and $F_W=\mbox{\it Č}_W$ or
$F_V=\VR_V$ and $F_W=\VR_W$.
\end{theorem}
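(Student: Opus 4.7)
The plan is to chain the two stability results already proved in the paper: Theorem \ref{th:chazal} bounds the bottleneck distance between the persistence diagrams in terms of $2 d_{GH}(V,W)$, while Proposition \ref{prop:bot} bounds $|E(F_V) - E(F_W)|$ in terms of the bottleneck distance. So feeding the output of Proposition \ref{prop:bot} into the hypothesis of Theorem \ref{th:chazal} should yield the desired control of entropies by Gromov-Hausdorff distance. The argument will cover simultaneously the Čech and Vietoris-Rips cases, since Theorem \ref{th:chazal} applies uniformly to both.

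More concretely, I would proceed as follows. Fix $\epsilon > 0$. First, apply Proposition \ref{prop:bot} to the pair of filtrations $F_V, F_W$ to obtain some $\delta' > 0$ such that $d_b(dgm(F_V), dgm(F_W)) < \delta'$ implies $|E(F_V) - E(F_W)| < \epsilon$. Next, set $\delta = \delta'/2 > 0$. Under the hypothesis $2 d_{GH}(V,W) \leq \delta$, Theorem \ref{th:chazal} yields
\[
d_b(dgm(F_V), dgm(F_W)) \leq 2 d_{GH}(V,W) \leq \delta < \delta',
\]
and Proposition \ref{prop:bot} then delivers $|E(F_V) - E(F_W)| < \epsilon$, which is the desired conclusion.

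There is no substantive obstacle, as both ingredients are at hand. The only minor point to handle with care is the mismatch between the non-strict inequality in the hypothesis ($2 d_{GH}(V,W) \leq \delta$) and the conclusion of Theorem \ref{th:chazal} on one side, and the strict inequality required in Proposition \ref{prop:bot} on the other side; choosing $\delta$ strictly smaller than the $\delta'$ produced by Proposition \ref{prop:bot} (for instance by halving it) settles this trivially. The finiteness of the persistence diagrams implicitly used by Proposition \ref{prop:bot} via Remark \ref{uno} is automatic here because the PCDs $V$ and $W$ are finite, so the induced Čech and Vietoris-Rips filtrations are finite as well.
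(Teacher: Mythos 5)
Your proposal is correct and follows essentially the same route as the paper: combine Theorem \ref{th:chazal} with Proposition \ref{prop:bot} to transfer the Gromov--Hausdorff bound to the entropy difference. The only (harmless) difference is your halving of $\delta'$ to reconcile the non-strict inequality in the hypothesis with the strict one in Proposition \ref{prop:bot}, a detail the paper glosses over.
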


\begin{proof}
First, by Prop. \ref{prop:bot} we have that fixed $\epsilon>0$,
there exists $\delta>0$ such that 
if $d_b(dgm(F_V), dgm(F_W)) <\delta$ then $|E(F_V) - E(F_W)| < \epsilon$.
\\
Second, by Th. \ref{th:chazal} we have that
$d_b(dgm(F_V),dgm(F_W))\leq 2 d_{GH}(V,W).$
\\
Therefore, if $2 d_{GH}(V,W)<\delta$ then $|E(F_V) - E(F_W)| < \epsilon$.
\qed
\end{proof}

\section{Properties of the persistent entropy  of  Vietoris-Rips
filtrations}\label{sec:entropy_rips}

Since Vietoris-Rips filtration  are widely used in practice, 
we focus now our effort in the study of properties
of the persistent entropy  of  this special kind of 
filtrations.

The first thing we have to take into account is that, in practice, one will never construct the filtration up to the end and will stop at a certain time $T$. Then, $\VR_V=\{\VR_V(t) \,|\,  t\leq T\}$. 
To decide when to stop, we use the following result.

\begin{proposition}\label{stop}
Let $V=\{v_1,\dots,v_m\}$ be a PCD in  ${\mathbb R}^d$. Let  $$T=\frac{\min_i\max_j d(v_i,v_j) }{2}.$$ Then, $\beta_0(\VR_V(T))=1$ and $\beta_k(\VR_V(T))=0$ for $k>0$. 
\end{proposition}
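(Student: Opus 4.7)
My plan is to show that $\VR_V(T)$ is a simplicial cone with apex at a suitably chosen vertex $v_{i_0}$, from which contractibility (and hence the claim about Betti numbers) follows immediately.

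First I would recall how edges are introduced in $\VR_V(r)$: two closed balls of radius $r$ centered at $v_i$ and $v_j$ intersect iff $d(v_i,v_j)\le 2r$, so $\{v_i,v_j\}$ is a $1$-simplex of $\VR_V(r)$ exactly when $d(v_i,v_j)\le 2r$. More generally, a subset $S\subseteq V$ is a simplex of $\VR_V(r)$ iff all pairwise distances within $S$ are $\le 2r$. With $r=T$, the threshold becomes $2T=\min_i\max_j d(v_i,v_j)$.

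Next I would pick an index $i_0$ achieving the outer minimum, i.e.\ $\max_j d(v_{i_0},v_j)=2T$. Then for every $v_j\in V$ we have $d(v_{i_0},v_j)\le 2T$, so $v_{i_0}$ is joined by an edge to every other vertex of $\VR_V(T)$. The key observation is then that $\VR_V(T)$ is a (simplicial) cone with apex $v_{i_0}$: if $\sigma=\{v_{j_1},\dots,v_{j_k}\}$ is any simplex of $\VR_V(T)$, then $\sigma\cup\{v_{i_0}\}$ also satisfies the pairwise distance bound (the old pairs already did, and each new pair $(v_{i_0},v_{j_\ell})$ has distance $\le 2T$), so it is again a simplex of $\VR_V(T)$.

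To conclude, I would invoke the standard fact that a simplicial cone is contractible (the straight-line homotopy to the apex works on the geometric realization), hence has the homology of a point: $H_0=\mathbb{Z}$ and $H_k=0$ for $k>0$, i.e.\ $\beta_0(\VR_V(T))=1$ and $\beta_k(\VR_V(T))=0$ for $k>0$. The only mild subtlety, and really the only thing to double-check, is the choice of non-strict versus strict inequality in the definition of $\VR_V(r)$; with the closed-ball convention used in the paper the boundary case $d(v_{i_0},v_j)=2T$ still gives an edge, so the cone argument goes through without modification.
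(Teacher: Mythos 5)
Your proposal is correct and follows essentially the same route as the paper: both identify a vertex $v_{i_0}$ realizing the min--max distance $2T$ and observe that $\VR_V(T)$ is a simplicial cone with apex $v_{i_0}$. The only difference is in the last step, where you invoke contractibility of a cone while the paper carries out the chain-level argument explicitly, showing that any $k$-cycle $c$ is the boundary of the chain obtained by coning off (with the apex) those simplices of $c$ that do not already contain it.
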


\begin{proof}
First, notice that there exists a vertex $v$ such that $\max_j d(v,v_j)=2T$. That is, 
$d(v,v_j)\leq 2T$ for  $1\leq j\leq m$. Then, $v$ is connected to $v_j$ by an edge
in $\VR_V(T)$, for 
 $1\leq j\leq m$. In particular, $\beta_0(\VR_V(T))=1$.
 \\
 Now,  
 observe that if
 $\sigma=\{v_0, v_1, \dots, v_k\}$ is a $k$-simplex in $\VR_V(T)$ and
 $v\not\in \sigma$, then $\sigma\cup \{v\}=\{v,v_0, v_1, \dots, v_k\}$ is a $(k+1)$-simplex in  $\VR_V(T)$ and $\partial_{k+1}(\sigma\cup \{v\})=
\sigma+\partial_{k}(\sigma)\cup \{v\} $.
 \\
Let $c=\sum_{i\in I} \sigma_i$ be a   cycle in $C_k(\VR_V(T))$. 
Let $J=\{j \,|\, j\in I$ and $v$ is not a vertex of $\sigma_j\}$. Let $b=\sum_{j\in J}\sigma_j\cup\{v\}$. Then 
 $$\partial_{k+1}(b)=\sum_{j\in J}\sigma_j+\partial_{k}(\sigma_j)\cup\{v\}=
\sum_{j\in J}\sigma_j+\sum_{i\in I\setminus J}\sigma_i=c.$$
 Therefore, $c$ is a boundary. Then $\beta_k(\VR_V(T))=0$ for $k>0$.
 \qed
\end{proof}

From now on, given a PCD $V=\{v_1,\dots,v_m\}$ in $\mathbb{R}^d$,  we   construct $$\VR_V=\{\VR_V(t)\,|\,t\leq T\}\;\mbox{ for $T=\frac{\min_i\max_j d(v_i,v_j) }{2}$.}$$
By Prop. \ref{stop}, the biggest bar in the persistence barcode in dimension $0$  was born at time $t=0$ and survives until the end (i.e., time $t=T$) and the smallest bar was born at time $t=0$ and survives until $t=r=\min_{i,j}d(v_i,v_j)$.
Fixed the number of bars in the persistence barcode and the maximum  and minimum lengths of the bars, $T$ and $r$, the following result shows the lengths of the rest of the bars that provide the minimum persistent entropy. This result will be very useful in the next section to detect topological features.  

\begin{theorem}\label{th:min} Let $L=\{\ell_1,\dots,\ell_n\}$ such that 
$\ell_1=T$, $\ell_i\geq \ell_{i+1}$ for $1\leq i<n$ and $\ell_n=r$.
Let $M=\{T,\stackrel{Q}{\dots},T,r,\stackrel{n-Q}{\dots},r\}$.
Then 
$$E(L)\geq E(M)\;\mbox{ for $Q=\left[\frac{\alpha n(\alpha-1-\log(\alpha))}{(\alpha-1)^2} \right]$ being $\alpha=\frac{r}{T}$.}$$
\end{theorem}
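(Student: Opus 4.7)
The plan is to establish the inequality $E(L)\geq E(M)$ by a two-stage minimization. First, with $\ell_1=T$ and $\ell_n=r$ held fixed, I show that any minimizer of $E$ over the compact set $\{L:\ell_i\in[r,T]\}$ must have each free entry in $\{r,T\}$. Second, restricted to configurations with $Q$ entries equal to $T$ and $n-Q$ entries equal to $r$, I determine the optimal $Q$ by calculus.

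For the reduction step, I freeze all bars except one $\ell_j\in(r,T)$. Rewriting
\[
E=\log S-\tfrac{1}{S}\sum_{k=1}^n \ell_k\log\ell_k,\qquad S=\sum_k\ell_k,
\]
a short computation gives $\frac{dE}{d\ell_j}=\frac{1}{S}\bigl(\log(S/\ell_j)-E\bigr)$, so at any interior critical point $\log(S/\ell_j)=E$. Differentiating again and evaluating at such a critical point yields
\[
\frac{d^2E}{d\ell_j^2}=\frac{1}{S^2}\left(1-\frac{S}{\ell_j}\right)<0,
\]
the inequality holding because $S>\ell_j$ whenever $n\geq 2$. Hence every interior critical point is a strict local maximum along the $\ell_j$-direction, forcing the minimum of $E$ in $\ell_j$ onto the endpoint set $\{r,T\}$. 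Applying this coordinatewise to a global minimizer, each $\ell_j\in\{r,T\}$.

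For the optimization step, with $Q$ entries $T$ and $n-Q$ entries $r$, set $\alpha=r/T$, $t=Q/n$ and $\lambda(t)=\alpha+t(1-\alpha)$, so $S=nT\lambda(t)$. A direct computation gives
\[
E(Q)=\log n+\log\lambda(t)-\frac{(1-t)\alpha\log\alpha}{\lambda(t)}.
\]
Using the identity $\lambda(t)+(1-t)(1-\alpha)=1$, the derivative collapses to
\[
\frac{dE}{dt}=\frac{(1-\alpha)\lambda(t)+\alpha\log\alpha}{\lambda(t)^2},
\]
whose zero on $(0,1)$ is $\lambda(t^*)=-\alpha\log\alpha/(1-\alpha)$. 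Solving back for $t$ gives exactly $t^*=\alpha(\alpha-1-\log\alpha)/(\alpha-1)^2$, so $Q^*=nt^*$ matches the formula in the statement. To confirm this is a global minimum I would check $\frac{dE}{dt}(0)<0$ and $\frac{dE}{dt}(1)>0$ (elementary monotonicity of $(1-\alpha)+\log\alpha$ and $(1-\alpha)+\alpha\log\alpha$ on $(0,1)$), together with $\frac{d^2 E}{dt^2}(t^*)=(1-\alpha)^2/\lambda(t^*)^2>0$.

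The main obstacle I anticipate is the discreteness of $Q$: the continuous optimum $Q^*$ is almost never an integer, and the announced value is the rounding $Q=[Q^*]$. Strict convexity of $E$ in $t$ near $t^*$ (from the second-derivative sign above) should force the integer minimizer to be the one indicated by the bracket function, but checking that the rounded $Q$ indeed realizes the minimum among admissible integers, together with the degenerate cases $\alpha\to 1$ (where $t^*\to 1/2$) and $\alpha$ near $0$ (where the constraint $\ell_1=T$ binds and $Q=1$), is the delicate bookkeeping that finishes the argument.
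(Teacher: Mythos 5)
Your proposal is correct and follows the same two-stage strategy as the paper --- first reduce to configurations in which every free bar has length $r$ or $T$, then optimize over the number $Q$ of long bars --- but each stage is executed by a genuinely different, and in places more careful, argument. For the reduction, the paper asserts that entropy is concave on a simplex-like region $\Omega$ and that the minimum is therefore attained at an extreme point, identifying the extreme points with the two-valued configurations; your coordinatewise computation $\frac{dE}{d\ell_j}=\frac{1}{S}\bigl(\log(S/\ell_j)-E\bigr)$ together with the sign of the second derivative at critical points reaches the same conclusion in a self-contained, checkable way. For the optimization, the paper's displayed reduced formula for $E(t)$ actually drops the $t$-dependent term $\log\lambda(t)$ (as written, that expression is strictly decreasing in $t$ and has no interior critical point), although it then states the correct $t_0$; your computation keeps all terms, the identity $\lambda(t)+(1-t)(1-\alpha)=1$ collapses the derivative exactly as you claim, and you additionally verify the boundary derivative signs and $E''(t^*)>0$, which the paper omits. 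The one point that neither you nor the paper settles is the integer rounding: since $E(t)$ is unimodal, the integer minimizer is $\lfloor nt^*\rfloor$ or $\lceil nt^*\rceil$, and taking $Q$ to be the nearest integer to $nt^*$ does not by itself guarantee that $E(M)$ is the minimum over admissible integer configurations (nor that $1\le Q\le n-1$ in degenerate cases); you at least flag this explicitly, whereas the paper passes over it silently, so this shared gap should not count against your argument relative to the original.
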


 \begin{proof}
 First, fixed $n$, $T$ and $r$, 
 Let $p_i=\frac{l_i}{S_L}$. Since the entropy is a concave function in 
 $$\Omega=\left\{(p_1,p_2,\dots p_n)\,|\,\sum\limits_{i=1}^n p_i=1,\frac{1}{(n-1)+\alpha}<p_i<p_1<\frac{1}{(n-1)\alpha+1}\right\},$$
 being $\alpha=\frac{r}{T}$, the minimum is attained at an extremal point of $\Omega$. Let 
 $$\mbox{$P=(p_1,\stackrel{i}{\dots},p_1,\alpha p_1,\stackrel{n-i}{\dots},\alpha p_1)$, with $1<i<n$,}$$
 be an extremal point. Since $\sum\limits_{i=1}^n p_i=1$, then $p_1=\frac{1}{i+\alpha(n-i)}$ and the entropy of $P$ is:
 $$E(P)= \log(\alpha n)+ \log(1+\frac{i(1-\alpha)}{\alpha n})-\frac{\log(\frac{1}{\alpha})}{1+(\frac{n}{i}-1)\alpha}.
 $$
 Consider $t=\frac{i}{n}\in(0,1)$,
 then:
 $$E(P)=E(t)=\frac{\alpha(1-t)\log(\frac
 {1}{\alpha})}{\alpha(1-t)+t}$$
 The derivative of $E(t)$ is null for:
 $$t_0=\frac{\alpha\log(\frac{1}{\alpha})-\alpha(1-\alpha)}{(1-\alpha)^2}$$
 So the minimum entropy is attained for 
 $$Q=
\left[ n t_0\right]=\left[n \frac{\alpha\log(\frac{1}{\alpha})-\alpha(1-\alpha)}{(1-\alpha)^2}\right].$$
Taking in account that $p_1=T$, the barcode with $l_1=T$ and $l_n=r$ with minimum entropy is $M=\{T,\stackrel{Q}{\dots},T,r,\stackrel{n-Q}{\dots},r\}$. 
\qed
 \end{proof}
 
In the following proposition, we establish the maximum entropy we can reach for $n$ bars fixing the the maximum  and minimum lengths of the bars.

 \begin{proposition}\label{prop:max}
 Fixed $n$, $T$ and $r$, Let $L=\{\ell_1,\dots,\ell_n\}$ such that 
 $\ell_1=T$, $\ell_i\geq \ell_{i+1}$ for $1\leq i<n$ and $\ell_n=r$.
 Let $$M'=\{T, b,\stackrel{n-2}{\dots},b,r\},\;\mbox{
 where $b=T\alpha^{\alpha/(1+\alpha)}$ and $\alpha=\frac{r}{T}$. }$$
 Then $E(L)\leq E(M')$.
 \end{proposition}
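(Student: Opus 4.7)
The plan is to recognise this statement as a direct specialization of Theorem~\ref{theorem} with $i = n-2$ and the two ``frozen'' bars being the extremes of lengths $T$ and $r$. First, I will note that persistent entropy depends only on the multiset of bar lengths, so the ordering $\ell_1 \geq \cdots \geq \ell_n$ is simply a bookkeeping convention expressing $\max L = T$ and $\min L = r$. Consequently I may relabel the bars so that $T$ and $r$ occupy positions $n-1$ and $n$, leaving the $n-2$ remaining bars as the ``free'' ones to which Theorem~\ref{theorem} applies.

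Next I would apply Theorem~\ref{theorem} with $i = n-2$, $R_{n-2} = \{T,r\}$ and $P_{n-2} = T+r$. The theorem guarantees that, among all choices of the free bars, the persistent entropy is maximised when they are all equal to the common value $(T+r)/e^{E(\{T,r\})}$. Using the product form $\ell'_j = \prod_k \ell_k^{\ell_k/P}$ of the optimiser noted right after Theorem~\ref{theorem}, I compute
\[
\ell' \;=\; T^{T/(T+r)}\, r^{r/(T+r)} \;=\; T^{1/(1+\alpha)}\,(T\alpha)^{\alpha/(1+\alpha)} \;=\; T\,\alpha^{\alpha/(1+\alpha)} \;=\; b,
\]
so the upper envelope supplied by Theorem~\ref{theorem} is exactly $M' = \{T,b,\dots,b,r\}$.

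Before concluding, I will verify that $M'$ is itself compatible with the monotonicity constraint, i.e.\ that $r \leq b \leq T$. Since $\alpha \in (0,1]$, the exponent $\alpha/(1+\alpha)$ lies in $(0,1/2]$; therefore $\alpha^{\alpha/(1+\alpha)} \leq 1$, giving $b \leq T$, while $\alpha^{\alpha/(1+\alpha)} \geq \alpha^{1} = \alpha$, giving $b \geq r$. Thus $M'$ is an admissible configuration, and Theorem~\ref{theorem} (applied to the permuted list) yields $E(L) \leq E(M')$, as required.

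I do not anticipate any serious obstacle. The whole content is that Theorem~\ref{theorem} already furnishes the maximiser once any suffix of bar lengths is frozen, and here the two fixed bars of lengths $T$ and $r$ play exactly that role. The only real work is the short calculation that turns $(T+r)/e^{E(\{T,r\})}$ into the explicit form $T\alpha^{\alpha/(1+\alpha)}$, together with the monotonicity check $r \leq b \leq T$.
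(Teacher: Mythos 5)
Your proposal is correct and follows essentially the same route as the paper: reorder the list so that the two extreme bars $T$ and $r$ form the frozen suffix, apply Theorem~\ref{theorem} with $i=n-2$, and compute the common optimal length $T^{T/(T+r)}r^{r/(T+r)}=T\alpha^{\alpha/(1+\alpha)}=b$. Your added check that $r\leq b\leq T$ (so that $M'$ is an admissible configuration) is a small but worthwhile detail the paper omits.
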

 \begin{proof}
  First, reorder the list $L=\{\ell_2,\dots,\ell_n,\ell_1\}$ and then neutralize the bars $\ell_j$ for $2\leq j\leq n-1$. By Th. \ref{theorem}, the new values that provide the maximum entropy are: $$\ell'_j=T^{T/(r+T)}r^{r/(r+T)}=T^{1/(\alpha+1)}r^{\alpha/(\alpha+1)}
  =T\alpha^{\alpha/(\alpha+1)}.$$
\qed
 \end{proof}

Th. \ref{th:min} and Prop. \ref{prop:max} confirm that the possible values of the persistent entropy $B(F)$ of a filtration $F$ associated to a PCD $V$ is highly influenced by the number $n$ of bars in $B(F)$ and the rate between the minimum and maximum persistence entropy that we can reach with  $n$ bars. This rate is also influenced by the minimum distance $r$ between two points in the PCD and the radius $2T$ of $V$.
Now, given a persistence barcode $L$ with $n$ bars, maximum length of the bars equal to $T$ and minimum length equal to $r$, the relative entropy $E(L)/E(M')$, being $M'$ the possible maximum entropy with same data $n$, $r$ and $T$, allows us  to compare 
two persistence barcodes  with different numbers of bars. Finally, observe that the value of $Q$ in Th. \ref{th:min} gives us a quantity of the maximum number of topological features we can find fixing the length of the persistence barcode and the maximum and minimum length of the bars. 

\section{Separating topological features from noise}
\label{method}

Let us start now with a PCD $V=\{v_1,\dots,v_m\}$ in $\mathbb{R}^d$ from a distribution $\mathbb{P}$ supported on
a topological  space $\mathfrak{C}$. 
Suppose the  Vietoris-Rips filtration $\VR_V$ is computed from $V$ (being $T=\frac{\min_i\max_jd(v_i,v_j)}{2}$), and the persistence barcode $B(\VR_V)$  is computed from $\VR_V$. 
The following are the steps of our proposed method, based on persistent entropy, to separate topological noise from topological features in the persistence barcode $B(\VR_V)$, estimating, in this way, the topology of $\mathfrak{C}$.

\begin{alg}\label{proc} Computing topological features from the persistent barcode  of the Vietoris-Rips filtration   of a given PCD  in $\mathbb{R}^d$.
\begin{itemize}
\item[] {\bf Input:} A PCD $V=\{v_1,\dots,v_m\}$ in $\mathbb{R}^d$, its Vietoris-Rips filtration  $\VR_V=\{\VR_V(t)\,|\, t\leq T\}$ and  its associated persistence barcode $B(\VR_V)=\{[x_i,y_i)$ $\,|\, 1\leq i\leq n\}$.
\item[1.]  Sort the lengths of the bars in $B(\VR_V)$
in decreasing order, except for the longest bar (whose length is equal to  $T$) to obtain 
$L=\{\ell_1,\dots,\ell_n\}$ such that $\ell_n=T\geq \ell_1$ and $\ell_i\leq \ell_j\leq \ell_{n-1}=r=\min_{i,j}d(v_i,v_j)$ for $1\leq i<j<n-1$. \\
Initially,  $L'_0:=L$ and $n':=n$.
\item[2.] {\bf For} $i=1$ to $i=n'-2$ {\bf do}:
\begin{itemize}
\item[a.] Compute the persistent entropy $E(L'_i)$ for 
$L'_i=\{\ell'_1,\dots,\ell'_{i},\ell_{i+1},\dots,$ $\ell_{n'}\}$, being
$\ell'_j=\frac{P'_{i}}{e^{E(R'_{i})}}$ for $1\leq j\leq i$ as in Th. \ref{theorem}.
\item[b.] Compute $$
C=\frac{S_{L'_{i-1}}}{S_{L'_i}}=
\frac{P'_{i-1}+(i-1)\frac{P'_{i-1}}{e^{E(R'_{i-1})}}}{P'_{i}+i\frac{P'_{i}}{e^{E(R'_{i})}}}\;\mbox{ and }\; Q=\left[\frac{\alpha n'(\alpha-1-\log(\alpha))}{(\alpha-1)^2}\right]
$$
being $\alpha=\frac{r}{T}$.
\end{itemize}
{\bf while} $C\geq 1$.
\item[3.] {\bf If} $Q<i$, then the bars $[x_j,y_j)$ with $i<j<n-1$ represent noise in the pesistence barcode. Redefine $L_0':=L'_0\setminus\{\ell_{i+1},\dots,\ell_{n'-2}\}$ and $n':= i+2 $. 
{\bf Go to step 2.} 
\\
{\bf Else}, the  bars of $B(VR_V)$ with lengths in the set $\{T,\ell_1,\dots, \ell_i\}$ represent topological features of $\VR_V$.
\item {\bf Output:} The bars of $B(VR_V)$ that represent topological features  of $\VR_V$.
\end{itemize}
\end{alg}

The following result guarantees the end of the while-loop in Proc. \ref{proc}.

\begin{theorem}\label{th:parada}
Fixed $n'$ in Proc. \ref{proc}, there always exists a value  $i$, $1\leq i\leq n'-2$, such that 
$C=\frac{S_{L'_{i-1}}}{S_{L'_i}}<1$ except when  $T=r$ (which corresponds to a uniform distribution and, in this case, $Q=n'$).
\end{theorem}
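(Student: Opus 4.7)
The plan is to use the product representation $\ell'_j = \prod_{k=i+1}^{n'}\ell_k^{\ell_k/P'_i}$ of the common replacement value in $L'_i$, highlighted immediately after the proof of Theorem~\ref{theorem}. Writing $\ell^{(i)}$ for this value, I first expand $S_{L'_i} - S_{L'_{i-1}} = i\,\ell^{(i)} - (i-1)\,\ell^{(i-1)} - \ell_i$. Because $P'_{i-1} = \ell_i + P'_i$, the product formula gives $\log\ell^{(i-1)} = \tfrac{\ell_i}{P'_{i-1}}\log\ell_i + \tfrac{P'_i}{P'_{i-1}}\log\ell^{(i)}$, so $\ell^{(i-1)}$ is a log-convex combination of $\ell_i$ and $\ell^{(i)}$ and lies strictly between them whenever the two differ. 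A sign analysis of the previous identity then yields the central equivalence
\[
C < 1 \;\Longleftrightarrow\; \ell^{(i)} > \ell_i,
\]
with $C = 1$ precisely when $\ell^{(i)} = \ell_i$. Indeed, if $\ell_i < \ell^{(i)}$ then $\ell_i < \ell^{(i-1)} < \ell^{(i)}$ and $i\ell^{(i)} - (i-1)\ell^{(i-1)} - \ell_i > \ell^{(i)} - \ell_i > 0$; the reverse and the equality cases are analogous.

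To exhibit an index at which the right-hand side holds, I examine the extreme case $i = n'-2$: since $R'_{n'-2} = \{\ell_{n'-1},\ell_{n'}\} = \{r,T\}$, the weighted geometric mean collapses to $\ell^{(n'-2)} = r^{r/(r+T)}\,T^{T/(r+T)}$, which is strictly greater than $r$ whenever $T > r$. Using the recursion $\log\ell^{(i+1)} = \log\ell^{(i)} + \tfrac{\ell_{i+1}}{P'_{i+1}}\log(\ell^{(i)}/\ell_{i+1})$ together with the sort $\ell_1 \ge \ell_2 \ge \dots \ge \ell_{n'-1} = r$, I would track the two sequences $\{\ell_i\}$ and $\{\ell^{(i)}\}$ simultaneously and identify the first index where $\ell^{(i)}$ overtakes $\ell_i$; by the central equivalence this index satisfies $C<1$ and lies within the admissible range. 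For the degenerate case $T = r$, all $\ell_k$ equal $T$ (since $r \le \ell_k \le T$), so $\ell^{(i)} = T = \ell_i$ and $C = 1$ throughout; meanwhile $\alpha = r/T = 1$, and two applications of l'H\^opital's rule to $(\alpha - 1 - \log\alpha)/(\alpha - 1)^2$ evaluate the expression for $Q$ in Theorem~\ref{th:min} at $\alpha = 1$ as $n'$, in agreement with the stated exception.

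The hardest step will be the second: the equivalence reduces the theorem to an existence statement for the crossing of the two monotone-controlled sequences $\{\ell_i\}$ and $\{\ell^{(i)}\}$, and one must rule out the possibility that the crossing is delayed beyond index $n'-2$. I expect the quantitative lower bound $\ell^{(i)} \ge r$, the explicit value $\ell^{(n'-2)} = r^{r/(r+T)}T^{T/(r+T)}$, and the strict inequality $T > r$ to be the key ingredients; handling barcodes in which many middle lengths $\ell_i$ remain close to $T$ is the delicate point that the argument must address.
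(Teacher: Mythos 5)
Your first step is correct, and it is a genuinely different reduction from the paper's. Writing $S_{L'_i}-S_{L'_{i-1}}=i\,\ell^{(i)}-(i-1)\,\ell^{(i-1)}-\ell_i$ and observing that $\log \ell^{(i-1)}$ is a convex combination of $\log \ell_i$ and $\log \ell^{(i)}$ does yield the local criterion $C<1\Leftrightarrow \ell^{(i)}>\ell_i$. The paper argues globally instead: it notes $L'_{n'-2}=\{b,\dots,b,r,T\}$ with $b$ as in Prop.~\ref{prop:max}, telescopes $\frac{S_{L'_0}}{S_{L'_{n'-2}}}$ into the product of the ratios $C$, disposes of the case $S_{L'_0}<S_{L'_{n'-2}}$ immediately, and in the case $S_{L'_0}\ge S_{L'_{n'-2}}$ picks the index $i$ where the sorted lengths cross $b$ and bounds $S_{L'_i}<S_{L'_{n'-2}}$ by showing $\ell^{(i)}<b$ and $P'_i<(n'-i-2)b+r+T$.

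The genuine gap is your second step, which is the entire content of the theorem: ``tracking the two sequences and identifying the first index where $\ell^{(i)}$ overtakes $\ell_i$'' presupposes that such an index exists in $\{1,\dots,n'-2\}$, and no argument is given. Your endpoint computation compares $\ell^{(n'-2)}=b=r^{r/(r+T)}T^{T/(r+T)}$ with $r$, but by your own criterion what is needed at $i=n'-2$ is $b>\ell_{n'-2}$, and $\ell_{n'-2}$ can lie anywhere in $[r,T]$ while $b<T$. In the regime you yourself flag as delicate the crossing simply never happens: take $T=1$, $r=0.1$ (so $b=0.1^{1/11}\approx 0.811$) and all middle lengths equal to $0.9>b$, say $L=\{0.9,0.9,0.1,1\}$ with $n'=4$. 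Each $\ell^{(i)}$ is a weighted geometric mean that includes the factor $r$, so $\ell^{(1)}\approx 0.850$ and $\ell^{(2)}\approx 0.811$ are both below $0.9=\ell_i$, giving $S_{L'_0}=2.9>S_{L'_1}=2.85>S_{L'_2}\approx 2.722$ and hence $C>1$ at every admissible step even though $T\neq r$. So the obstruction is not merely that your argument is unfinished; the asserted crossing can fail to exist (and, for what it is worth, the paper's own second case degenerates in exactly this regime to the vacuous claim $S_{L'_{n'-2}}<S_{L'_{n'-2}}$, so your instinct about where the difficulty lies is sound). A separate small error: by l'H\^opital, $\lim_{\alpha\to 1}\frac{\alpha-1-\log\alpha}{(\alpha-1)^2}=\frac{1}{2}$, so the formula for $Q$ tends to $n'/2$, not to $n'$ as you state.
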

\begin{proof}
Observe that $L'_{n'-2}=\{b,\stackrel{n'-2}\cdots b,r,T\}$ for $b$ as in Prop. \ref{prop:max}.\\
First, 
if $S_{L'_0}<S_{L'_{n'-2}}$ then:
$$\frac{S_{L'_0}}{S_{L'_{n'-2}}}=\frac{S_{L'_0}}{S_{L'_1}}
\cdots \frac{S_{L'_{i-1}}}{S_{L'_{i}}}\cdots\frac{S_{L'_{n'-3}}}{S_{L'_{n'-2}}}<1.
$$
Then, there exists $i$, $1\leq i\leq n'-2$, such that 
$\frac{S_{L'_{i-1}}}{S_{L'_{i}}}<1$.\\
Second, if  $S_{L'_0}\geq S_{L'_{n'-2}}$, since  $S_{L'_{n'-2}}=(n'-2)b+r+T$ then there exists $i$, $1\leq i\leq n'-2$, such that $\ell_j\geq b$ for $j\leq i$ and $\ell_j<b$ for $i< j\leq n'-2$. Then, it is enough to prove that $S_{L'_i}<S_{L'_{n'-2}}$. By Th. \ref{theorem}, we have that:
$$S_{L'_i}=\sum_{j=1}^i\ell'_i+P'_i
=i\prod_{j=i+1}^{n'}\ell_j^{\ell_j/P'_i}+P'_{i}
=i\prod_{j=i+1}^{n'-2}\ell_j^{\ell_j/P'_i}T^{T/P'_i}r^{r/P'_i}+P'_{i}.$$
Observe that $P'_i< (n'-i-2)b+r+T$, since $\ell_j<b$ for $i<j\leq n'-2$.
Now, let us prove that 
$$\prod_{j=i+1}^{n-2}\ell_j^{\ell_j/P'_i}T^{T/P'_i}r^{r/P'_i}< b,$$ which,  taking the log of both sides, is equivalent to prove that:
$$\sum_{j=i+1}^{n'-2}\frac{\ell_j}{P'_i}\log(\ell_j)+\frac{T}{P'_i}\log(T)+\frac{r}{P'_i}\log(r)< \frac{T}{T+r}\log(T)+\frac{r}{T+r}\log(r).$$
 Replacing $P'_i$ by 
$\sum_{j=i+1}^{n'-2}\ell_j+r+T$ and simplifying, 
we have to prove that 
$$\sum_{j=i+1}^{n'-2}\ell_j\log(\ell_j)< \frac{T\log(T)+r\log(r)}{T+r}\sum_{j=i+1}^{n'-2}\ell_j.$$
Since, for $i+1\leq j\leq n'-2$,  $\ell_j<b$, then $\log(\ell_j)<\frac{T\log(T)+r\log(r)}{T+r}$ which concludes the proof.
\qed
\end{proof}

Observe that, in Proc. \ref{proc}, for $1\leq i\leq n'$, $E(L'_i)$ is the entropy of the barcode obtained by  replacing the first $i$ bars of $L'_i$ by  $i$ bars that maximize the entropy. Observe that  $E(L'_i)\leq E(L'_j)$ for $1\leq i<j\leq n'$  by Th. \ref{theorem}.
Then, the idea of the algorithm is to successively neutralize bars (using Th. \ref{theorem}) except for the longest and the shortest ones that are intrinsic in the nature of the filtration. We do it until $C=\frac{S_{L'_{i-1}}}{S_{L'_i}}$ is less than $1$. 
What we measure with $C$ is the change of the probability associated to the long bar $[0,T)$ which, in step $i$, is $p_n^{(i)}=\frac{T}{S_{L'_{i}}}$. Observe that if a long bar is neutralize at step $i$, then  
$p_n^{(i-1)}\leq p_n^{(i)}$, since  neutralization in this case means to shorten the bar $\ell_i$ which produces an increase of the probability of the longest bar.
On the other hand, if a short bar is neutralize at step $i$, then  $p_n^{(i-1)}>p_n^{(i)}$, since neutralization in this case means to elongate the bar $\ell_i$. In this last case, all the bars from $\ell_{i+1}$ to $\ell_{n'-2}$ are considered noise and removed from $L'_0$. We remove noise successively until the maximum number of topological features (computed as $Q$ using Th. \ref{th:min}) is reached. Then, the algorithm ends. 

Observe that this method is different from the one presented in \cite{datamod}. In that paper, in order to appreciate the influence of the current length $\ell_i$ in the  initial  persistent entropy $E(L)$, 
we divided $E(L'_i)-E(L'_{i-1})$ by  $\log(n)-E(L)$ to obtain $H_{rel}(i)$. 
Then, we compare $H_{rel}(i)$  with $\frac{i}{n}$ since  $H_{rel(i)}$ is affected by the total number of lengths and the number of lengths we are replacing.
Nevertheless, the  threshold $\frac{i}{n}$ was taken based on experimentation. In this paper, the constants  $C$ and $Q$ and the thresholds $C<1$ are founded on the mathematical results Th. \ref{th:min}, Prop. \ref{prop:max} and Th. \ref{th:parada}.

\begin{figure}[t!]
	\centering
		\includegraphics[width=8cm]{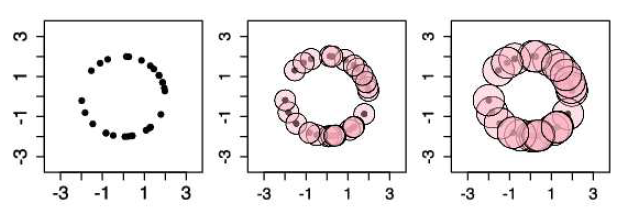}
	\caption{Left: $30$ data points sampled from a circle of radius $1$. Middle: Balls of radius $0.25$ centered at the sample points. 
Right: Balls of radius $0.4$ centered at the sample points.}
	\label{circle}
\end{figure} 

We have applied our methodology to three different scenarios.
First, we take $30$ data points sampled from a circle of radius $1$ (see Fig. \ref{circle}.Left). This example has been taken from paper \cite{confidence}. 
 Vietoris-Rips complex for $t=0.25$ can be deduced from the picture shown in Fig. 
\ref{circle}.Middle which consists of 
 two connected
components and zero loops. 
Looking at Vietoris-Rips complex for $t=0.4$ (see Fig. \ref{circle}.Right), we assist at the birth and death of topological features: at $t= 0.4$, one of the
connected components has died (was merged with the other one), and a loop appears; this loop will die at $t= 1$, when the union of the pink balls representing the distance function becomes simply connected.
In Table \ref{table1}.Top, we have applied our method to the  bars that make up the persistence barcode (without differentiating dimension).  This way, only the bars with length $1$  (that corresponds to the connected component that survives until the end) and $0.6$ (that correspond to the loop that appears at $t=0.4$ and disappears at $t=1$) are considered topological features.
 Later, in Table \ref{table1}.Bottom, we have applied our method to the  bars that make up the persistence $0$-barcode (i.e., the lifetime of the connected components along the filtration).  This way,  the bars with length $1$   and $0.35$ (that corresponds to the connected components that dies just before the loop is created) are considered topological features. 

\begin{table}
\centering
\begin{tabular}{c}
\begin{tabular}{|c|c|c|c|c|}
 \hline
 $ \text{Iteration}$&$n'$& $Q$&  $E(L')/E(M')$ &$\alpha$ \\
 \hline
 1 & 30&5&0.93451& 0.05
 \\
 \hline\hline
   $\ell_i$  &$\ell'_i$ & C& $E(L'_i)/E(M')$ & Feature? \\
 \hline
 1. & 0.0412134&1.06692 & 0.944278 & \text{yes} \\
 0.6 & 0.0409925 &1.0259 &0.945869 & \text{yes} \\
 0.35 & 0.0409923 & 1.00065 &0.945870 & \text{yes} \\
 0.225 & 0.0409921 &1.00071 &0.945872 & \text{yes} \\
0.225 &0.0409835 &0.995049 &0.945934 & \text{yes} \\
0.2 &0.0409741 & 0.99457 &0.946002 & \text{no} \\
0.2 & 0.0409637&0.994018&0.946077 & \text{no} \\ 
\dots &   \dots  &  \dots &  \dots &  \dots\\
\hline
\end{tabular}
\\
\\
\begin{tabular}{|c|c|c|c|c|}
 \hline
$\text{Iteration}$& $n'$& $Q$&   $E(L')/E(M')$ &$\alpha$ \\
 \hline
 2&6&1&0.917626& 0.05\\
 \hline\hline
   $\ell_i$  &$\ell'_i$ & C& $E(L'_i)/E(M')$ & Feature? \\
 \hline
 1. & 0.2165 & 1.03422 &0.918784 & \text{yes} \\
 0.6 & 0.213219 & 0.900062 &0.927951 & \text{yes} \\
 0.35 & 0.20185 & 0.806752 & 0.960854 & \text{no} \\
 0.225 & 0.18911 & 0.746703 & 1. & \text{no} \\
 \dots &   \dots  &  \dots &  \dots &  \dots\\
 \hline
\end{tabular}
\end{tabular}
\caption{Results of our method applied to the barcode obtained from the PCD showed in Fig. \ref{circle} consisting of $30$ data points sampled from a circle of radius $1$.  }
\label{table1}
\end{table}

\begin{figure}[t!]
	\centering
		\includegraphics[width=12cm]{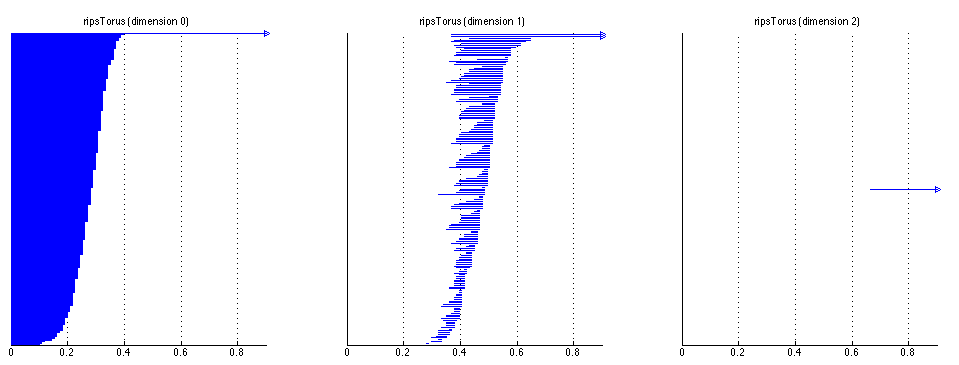}
	\caption{Barcodes (separated by dimension) computed from the Vietoris-Rips filtration associated to a point cloud lying on a 3D torus. Left: lifetimes of connected components. Middle: lifetimes of tunnels. 
Right: lifetimes of voids. }
	\label{figure_torus}
\end{figure}

Second, consider now a set $V$ of $400$ points sampled from  a 3D torus.
The barcodes (separated by dimension) computed from the Vietoris-Rips filtration associated to $V$ are showed in Fig. \ref{figure_torus}. We have applied our method to the $0$-barcode (lifetime of connected components along the V-R filtration) and the $1$-barcode (lifetime of loops  along the V-R filtration).
See Table \ref{table3}. The  bar of length $1.9$ in the tables  corresponds to the connected component that survives until the end. The bars of length $1.531$ correspond to the two tunnels of the 3D torus. In Table \ref{table3}.Bottom we show the results of our method applied to all the bars of the persistence barcode without separating by dimensions. We can see in this case that we obtain as topological features the  length of the bar representing the 
connected component, the ones representing the two tunnels and the one representing the void.
In Table \ref{table3}.Top  (resp. Table \ref{table3}.Middle),
we show the results of our method applied to the bars of the $0$-dimensional (resp. $1$-dimensional) persistence barcode.
Observe that the results are consistent with the ones obtained in Table \ref{table3}.Bottom.

\begin{table}
 \centering
\begin{tabular}{c}
\begin{tabular}{|c|c|c|c|c|}
\hline
$\text{Iteration}$& $n'$& $Q$&   $E(L')/E(M')$ &$\alpha$ \\
 \hline
 1 & 400& 47 & 0.993675 & 0.0521053\\
 2 & 286&34  & 0.995459 & 0.0521053\\
 3 & 154& 19 & 0.993659 & 0.0521053\\
 4& 40& 5 & 0.977194 & 0.0521053\\
 \hline\hline
 $\ell_i$  &$\ell'_i$ & C& $E(L'_i)/E(M')$ & \text{Feature?} \\
 \hline
 1.9 & 0.0270463 & 0.997193 & 0.977234 & \text{yes} \\
 0.396 & 0.0270402 & 0.996476 & 0.977294 & \text{no} \\
 0.387 & 0.0270339 & 0.996296 &0.977357 & \text{no} \\
  \dots &   \dots  &  \dots &  \dots &  \dots\\
  \hline
\end{tabular}
\\
\\
\begin{tabular}{|c|c|c|c|c|}
\hline
 $ \text{Iteration}$& $n'$& $Q$& $E(L')/E(M')$ &$\alpha$ \\
 \hline
 1 & 177&5  &0.893511& 0.00587851\\
 2 & 50& 2 &0.904135& 0.00587851\\
 3 & 6& 1 &0.796461& 0.00587851\\
   \hline\hline
 $\ell_i$  &$\ell'_i$ & C& $E(L'_i)/E(M')$ & \text{Feature?} \\
  \hline
 1.531 & 0.264348 & 1.22428 & 0.822957 & \text{yes} \\
 1.531 & 0.242872 & 0.791795 & 0.875366 & \text{yes} \\
 0.27 & 0.221058& 0.751341 &0.933577 & \text{no} \\
 0.261 & 0.198549 & 0.703848 &1. & \text{no} \\
 \dots &   \dots  &  \dots &  \dots &  \dots\\
  \hline
\end{tabular}
\\
\\
\begin{tabular}{|c|c|c|c|c|}
 \hline
$\text{Iteration}$& $n'$& $Q$&   $E(L')/E(M')$ &$\alpha$ \\
 \hline
  1 & 578&47 &0.969019& 0.00473684\\
 2 & 429& 9 &0.988556& 0.00473684\\
 3 &306 & 7 &0.989757& 0.00473684\\
 4 &179 &4  &0.985682& 0.00473684\\
 5 & 43& 1 &0.95329& 0.00473684\\
 \hline\hline
   $\ell_i$  &$\ell'_i$ & C& $E(L'_i)/E(M')$ & \text{Feature?} \\
 \hline
 1.9 & 0.0268932 & 1.05166 & 0.961183 & \text{yes} \\
 1.531 & 0.0259429 & 1.05987 & 0.970746 & \text{yes} \\
 1.531 & 0.0253107 & 1.04887 & 0.977304 & \text{yes} \\
 1.234 & 0.0253068 & 0.996996 & 0.977345 & \text{yes} \\
 0.396 & 0.025301 & 0.99627 & 0.977407 & \text{no} \\
 0.387 & 0.0252948 & 0.996079 & 0.977471 & \text{no} \\
  \dots &   \dots  &  \dots &  \dots &  \dots\\
  \hline
\end{tabular}
\end{tabular}
\caption{Results of our method applied to the persistence barcodes  of the  Vietoris-Rips filtration obtained from $400$ points sampled from  a 3D torus. In the table on the top, only bars in the $0$-dimensional pesistence barcode are taken into account. In the table on the middle, only 
bars in the $1$-dimensional pesistence barcode are considered. In the  table on the bottom,   the bars of persistence barcodes of dimension $0$, $1$ and $2$ are considered altogether. }
\label{table3}
\end{table}

\begin{figure}[t!]
	\centering
		\includegraphics[width=12cm]{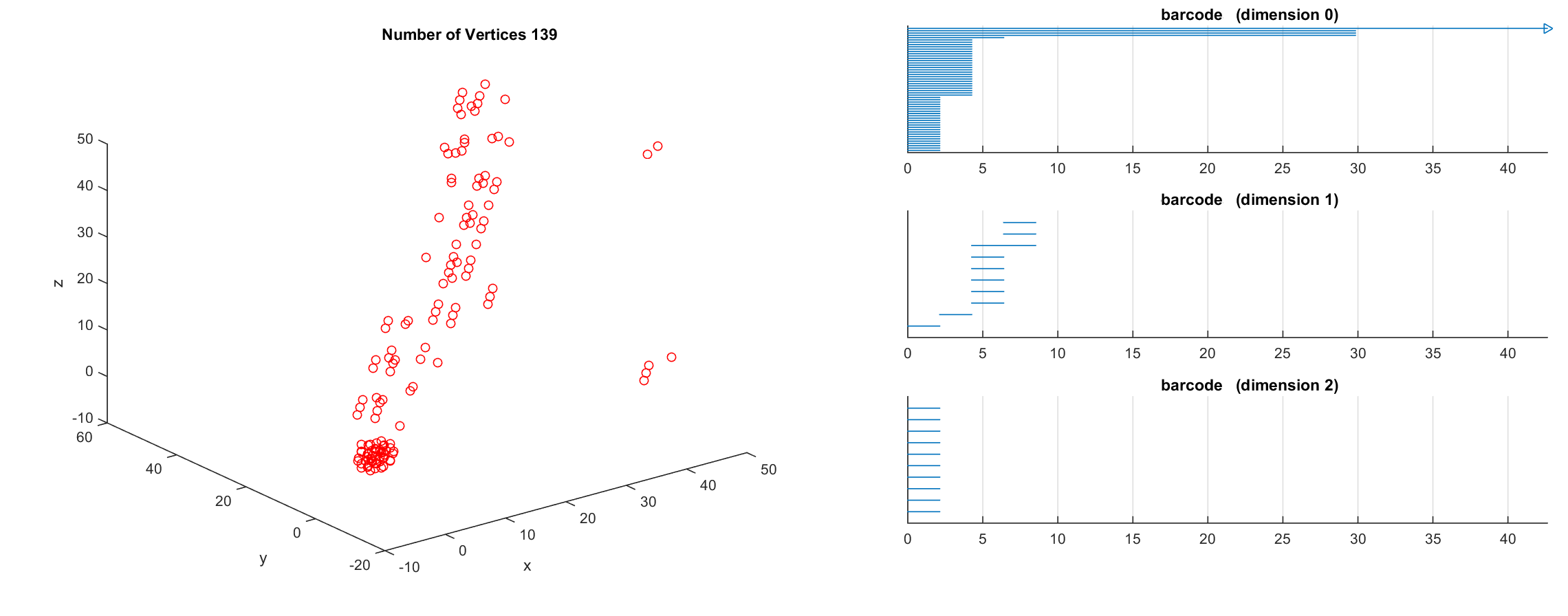}
	\caption{Left: A PCD $V$ in $\mathbb{R}^3$ composed by $139$ points. Right, from top to bottom: $0$-, $1$- and $2$-dimensional persistence barcode of $VR_V$. 
    }
    \label{fig:ejemplo3}
\end{figure} 

\begin{table}
 \centering
 \begin{tabular}{c}
\begin{tabular}{|c|c|c|c|c|}
\hline
 $ \text{Iteration}$& $n'$& $Q$& $E(L')/E(M')$ &$\alpha$ \\
 \hline
 1 & 76& 9 &0.857307& 0.05\\
  \hline\hline
 $\ell_i$  &$\ell'_i$ & C& $E(L'_i)/E(M')$ & \text{Feature?} \\
  \hline
 42.4484 & 0.023405 & 1.06815 & 0.869108 & \text{yes} \\
 31.8363 & 0.0219565 & 1.08028 & 0.883896& \text{yes} \\
 31.8363 & 0.0202248 & 1.09583 & 0.902912 & \text{yes} \\
 31.8363 & 0.0181347 & 1.11616 & 0.928160 & \text{yes} \\
 31.8363 & 0.0181326 & 0.997812 & 0.928187 & \text{yes} \\
 4.24484 & 0.0181304 & 0.997733 & 0.928215 & \text{no} \\
 4.24484 & 0.0181281 & 0.99765 & 0.928244 & \text{no} \\
  \dots &   \dots  &  \dots &  \dots &  \dots\\
  \hline
\end{tabular}
\\
\\
\begin{tabular}{|c|c|c|c|c|}
\hline
 $ \text{Iteration}$& $n'$& $Q$& $E(L')/E(M')$ &$\alpha$ \\
 \hline
1 & 98& 12 &0.870766& 0.05\\
2 & 30&4  &0.852821& 0.05\\
  \hline\hline
 $\ell_i$  &$\ell'_i$ & C& $E(L'_i)/E(M')$ & \text{Feature?} \\
  \hline
 42.4484 & 0.023405 & 1.06815 & 0.861170 & \text{yes} \\
 31.8363 & 0.0219565 & 1.08028 & 0.8725486& \text{yes} \\
 31.8363 & 0.0202248 & 1.09583 & 0.888921 & \text{yes} \\
 31.8363 & 0.0181347 & 1.11616 & 0.914344 & \text{yes} \\
 31.8363 & 0.0181326 & 0.997812 & 0.916294 & \text{yes} \\
 4.24484 & 0.0181304 & 0.997733 & 0.918325 & \text{no} \\
 4.24484 & 0.0181281 & 0.99765 & 0.920440 & \text{no} \\
  \dots &   \dots  &  \dots &  \dots &  \dots\\
  \hline
\end{tabular}
\end{tabular}
\caption{Results of our method applied to the PCD showed in Fig. \ref{fig:ejemplo3}.}
\label{table6}
\end{table}

Finally, consider a PCD $V$ of $76$ points in $\mathbb{R}^3$.
The barcodes (separated by dimension) computed from the Vietoris-Rips filtration associated to $V$ are showed in Fig. \ref{fig:ejemplo3}.Right. In Table \ref{table6}.Top, we have applied our method to the $0$-barcode  and in
 Table \ref{table6}.Bottom, the method is applied to all the bars of the barcode without differentiating by dimension.  
We can see that in both cases we obtain $5$ topological features. The idea is that, compared to the longest bars in the $0$-barcode, the bars in the $1$- and $2$-barcodes are considered noise:  The two tunnels and the one representing the void.
This example shows that, since bars in higher dimensions than $0$ are noise,   the results obtained with our method are independent on applying it on the whole set of bars of the persistence barcode  or separating the bars by dimension.

\section{Conclusions and future work}\label{conclusion}
 Vietoris-Rips complexes are a fundamental tool in topological data analysis, they allow to build a topological space from  higher dimensional data-set embedded in a metric space \cite{nanda2014simplicial}. The resulting complex is then studied by persistent homology. In order to provide a summary of the information provided by persistent homology new statistics have been defined. Among the statistics, we put our focus on a Shannon-like entropy that is known as persistent entropy. Persistent entropy records how much is ordered the construction of a topological space. In this paper, we discuss several properties of the persistent entropy when it is computed on  the persistence barcode of a given Vietoris-Rips filtration. The first property demonstrates the relations between  persistent entropy and the bottleneck distance, that is a well known measures for comparing persistence barcodes. This is a preliminary results for assuring that persistent entropy is a stable measure for dealing with Vietoris-Rips complexes. Moreover, the computation of persistent entropy is less computational expensive with respect to the bottleneck distance. Because the construction of Vietoris-Rips depends on the choice of the upper bound of a parameter, we identify a new quantity that can be used for and we hope this can be a signpost by the reader when he/she starts to investigate the construction of the Vietoris-Rips complexes. By introducing this quantity we are able to define a new methodology based on persistent entropy for identifying which are true topological features and which must be considered noisy topological features. We apply the methodology on a couple of examples. Briefly, the method is an iterative algorithm that at the $i$-th step  replaces the first $i$ bars by the same number of bars but with the length that maximizes the entropy. This way we ``neutralize" the effect of such $i$ bars and we can deduce if the bar at position $i$ is a topological feature or not.
 \\
As future works we are planning to extend the properties to the Witness complexes, that are roughly speaking a way for computing the Vietoris-Rips complexes from very large data-set. This will allow us to use our method for studying biological data  as well as RNA data for differentiating healthy cells from unhealthy cells \cite{davie2015discovery,nasuti2016metal}. We argue the method will let to highlight the topological features that are formed by the most relevant genes associated to pathologies.

%
%

\bibliographystyle{elsarticle-num}

\end{document}